\documentclass[letterpaper, 10 pt, conference]{ieeeconf}  

\IEEEoverridecommandlockouts                              
\usepackage{graphics} 
\usepackage{epsfig} 
\usepackage{mathptmx} 
\usepackage{times} 
\usepackage{amsmath} 
\usepackage{amssymb}  
\usepackage{cite}   
\usepackage{xcolor}
\usepackage{tikz}
\usetikzlibrary{arrows.meta, positioning, calc}

\newtheorem{theorem}{Theorem}
\newtheorem{remark}{Remark}

\title{\LARGE \bf
Robust multi-scale leader–follower control of large multi-agent systems}
\author{Davide Salzano, Gian Carlo Maffettone and Mario di Bernardo
\thanks{Davide Salzano and Gian Carlo Maffettone contributed equally to this work. Corresponding author: Mario di Bernardo.}
\thanks{Davide Salzano and Mario di Bernardo are with the Department fo Electrical Engineering and Information Technologies of the University of Naples Federico II, Naples, Italy. ({\tt\small davide.salzano@unina.it, mario.dibernardo@unina.it})}%
\thanks{Gian Carlo Maffettone and Mario di Bernardo are with the Modeling and Engineering Risk and Complexity program of the Scuola Superiore Meridionale, Naples, Italy ({\tt\small giancarlo.maffettone@unina.it})}%
}

\begin{document}

\maketitle
\thispagestyle{empty}
\pagestyle{empty}

\begin{abstract}
In many multi-agent systems of practical interest, such as traffic networks or crowd evacuation, control actions cannot be exerted on all agents. Instead, controllable leaders must indirectly steer uncontrolled followers through local interactions. Existing results address either leader–follower density control of simple, unperturbed multi-agent systems or robust density control of a single directly actuated population, but not their combination. We bridge this gap by deriving a coupled continuum description for leaders and followers subject to unknown bounded perturbations, and designing a macroscopic feedback law that guarantees global asymptotic convergence of the followers' density to a desired distribution. The coupled stability of the leader–follower system is analyzed via singular perturbation theory, and an explicit lower bound on the leader-to-follower mass ratio required for feasibility is derived. Numerical simulations on heterogeneous biased random walkers validate our theoretical findings.
\end{abstract}

\section{Introduction}
Controlling the collective behavior of large groups of interacting agents is a central challenge in domains ranging from traffic management \cite{siri2021freeway} to crowd evacuation \cite{zhou2019guided} and swarm robotics \cite{dorigo2021swarm}. When the number of agents is large, designing individual controllers becomes intractable \cite{d2023controlling}. Macroscopic approaches overcome this limitation by reformulating the control problem at the density level, describing collective behavior through partial differential equations (PDEs) whose complexity is independent of the number of agents \cite{sinigaglia2022density, elamvazhuthi2023density}.

A particularly effective multi-scale approach \cite{diBernardo2026multiscale} bridging microscopic (agent-level) and macroscopic (density-level) descriptions of multi-agent 
systems  is {\em continuification} (or continuation) 
\cite{nikitin2021continuation,maffettone2022continuification}. 
In this approach, a continuum description is derived from 
the agent-based model, the controller is designed at the PDE 
level, and the resulting control field is mapped back to 
agent-level inputs. The link between the two levels of 
description relies on two operations: (i) a \emph{micro-to-macro 
bridge}, which estimates the macroscopic density from the 
agents' positions (e.g., via kernel density estimation), and (ii)
a \emph{macro-to-micro bridge}, which maps the control field 
computed at the PDE level back to individual agent inputs 
(e.g., via spatial sampling of the control field). This 
pipeline has been successfully applied to swarm robotics 
\cite{maffettone2024mixed}, traffic \cite{fueyo2025continuation}, 
and networks of lasers \cite{nikitin2021boundary}.

In many scenarios of practical interest, however, control actions cannot be exerted on all agents. In traffic networks, only autonomous vehicles are controllable \cite{siri2021freeway}; in crowd evacuation, only robotic guides can be actuated \cite{zhou2019guided}. This motivates the leader--follower paradigm, where a population of controllable leaders steers uncontrolled followers through local interactions. A continuification-based solution to this problem was proposed in \cite{maffettone2025leader} and extended to shepherding \cite{di2025continuification} and bio-inspired plasticity \cite{maffettone2025density}. A separate line of work \cite{maffettone2026robust} proposed a robust continuification strategy to counteract unknown bounded perturbations, but only for a single, directly actuated population.

The combination of these two directions, robustness and indirect actuation, poses challenges that neither framework addresses when operating in isolation. This gap is tackled in this work. When both populations are subject to unknown bounded perturbations, the followers' convergence depends on the ability of the leaders to track a time-varying reference density, generated by the robust controller. This couples the two populations' dynamics in a way that requires a dedicated stability analysis. In this paper, we address this problem by deriving a coupled continuum description for leaders and followers subject to bounded perturbations, and designing a macroscopic feedback control law that guarantees global asymptotic convergence of the followers' density to a desired distribution. We derive an explicit lower bound on the leader mass required for the control problem to admit a solution, and, using singular perturbation theory, we analyze the stability of the coupled system. Numerical simulations on heterogeneous biased random walkers validate our findings and illustrate robustness to increasing levels of population heterogeneity and finite populations size.

\begin{figure*}[t]
\centering
\resizebox{1.8\columnwidth}{!}{%
\begin{tikzpicture}[
    >=latex,
    font=\normalsize,
    node distance=0.9cm,
    block/.style={
        draw,
        rectangle,
        minimum height=1cm,
        align=center
    },
    sum/.style={
        draw,
        circle,
        minimum size=5mm,
        inner sep=0pt
    }
]
\node[sum] (sumF) {};
\node[block, minimum width=2.6cm, right=1.2cm of sumF] (refgen)
    {Followers control};
\node[sum, right=of refgen] (sumL) {};
\node[block, minimum width=2.1cm, right=of sumL] (controller)
    {Leaders'\\controller};
\node[block, minimum width=2.2cm, right=1.3cm of controller] (m2a)
    {Macro-to-micro\\bridge};
\node[block, minimum width=2.6cm, right=of m2a] (agents)
    {Multi-agent\\dynamics};
\node[block, minimum width=2.2cm, right=of agents] (a2m)
    {Micro-to-macro\\bridge};

\draw[->] ([xshift=-1.9cm]sumF.west) -- (sumF.west)
    node[midway, above] {$\bar{\rho}^{F}$};
\node at ([xshift=-4mm,yshift=-2mm]sumF.south) {$-$};
\draw[->] (sumF.east) -- (refgen.west)
    node[midway, above] {$e^{F}$};

\draw[->] (refgen.east) -- (sumL.west)
    node[midway, above] {$\bar{\rho}^{L}$};
\node at ([xshift=-3mm,yshift=-2mm]sumL.south) {$-$};
\draw[->] (sumL.east) -- (controller.west)
    node[midway, above] {$e^{L}$};

\draw[->] (controller.east) -- (m2a.west)
    node[midway, above] {$u(x,t)$};
\draw[->] (m2a.east) -- (agents.west)
    node[midway, above] {$u_i$};
\draw[->] (agents.east) -- (a2m.west)
    node[midway, above] {$\mathbf{x}_i$};

\coordinate (tapF) at ([xshift=1.4cm]a2m.east);
\coordinate (outF) at ([xshift=1.9cm]a2m.east);
\coordinate (outL) at ([xshift=0.9cm,yshift=-0.3cm]a2m.east);

\draw[-] (a2m.east) -- (outF)
    node[midway, above] {$\rho^{F}$};
\draw[->] (outF) -- ++(0.5cm,0);

\draw[-] ([yshift=-0.3cm]a2m.east) -- (outL)
    node[midway, below] {$\rho^{L}$};

\def\fbL{-1.4cm}
\def\fbF{-2.2cm}

\draw (outL) |- ([yshift=\fbL]sumL.south);
\draw[->] ([yshift=\fbL]sumL.south) -- (sumL.south);

\draw (tapF) |- ([yshift=\fbF]sumF.south);
\draw[->] ([yshift=\fbF]sumF.south) -- (sumF.south)
    node[midway, right] {$\rho^{F}$};

\end{tikzpicture}%
}
\caption{Block diagram of the multi-scale leader-follower control architecture. An outer loop regulates the followers' density by comparing the reference $\bar\rho^F$ with the estimated density $\rho^F$, generating a reference density $\bar\rho^L$ for the leaders. An inner loop tracks this reference, producing the macroscopic control field $u(x,t)$. The \emph{macro-to-micro bridge} maps $u(x,t)$ to individual leader inputs $u_i$ via spatial sampling. The multi-agent dynamics produce leader and follower positions $\mathbf{x}_i$, which the \emph{micro-to-macro bridge} converts back to estimated densities $\rho^L$ and $\rho^F$ via density estimation, closing both feedback loops.}
\label{fig:block_diagram}
\end{figure*}
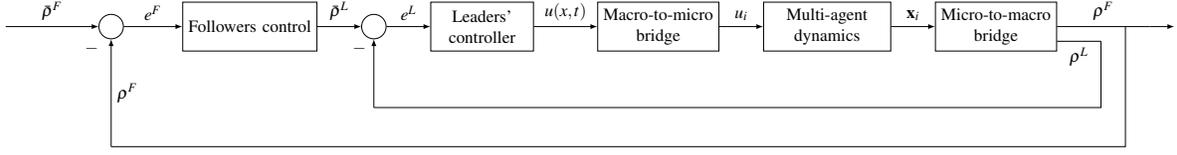

\section{Mathematical modeling}\label{sec:mathematical_modeling}
\subsection{Microscopic model}
We consider two populations of interacting agents, leaders and followers, evolving on the unit circle $\Omega= [-\pi, \pi]$. The dynamics of $N^L$ leaders and $N^F$ followers are
\begin{subequations}\label{eq:micro}
    \begin{align}
    \mathrm{d} x_i^L(t) &= \left[u_i(t)+h_i(t, X^L(t))\right]\,\mathrm{d}t, \quad i=1, \dots, N^L,\label{eq:leaders_micro}\\
    \mathrm{d}x_i^F(t) &= \left[\sum_{j=1}^{N^L} f(\{x_i^F(t),x_j^L(t)\}_\pi)+g_i(t, X^F(t))\right]\,\mathrm{d}t \nonumber\\&\quad\quad+ \sqrt{2D}\,\mathrm{d}W_i(t), \quad i = 1 ,\dots, N^F,\label{eq:followers_Equations}
\end{align}
\end{subequations}
where $x_i^L, x_i^F\in \Omega$ are the states of the $i$-th leader and follower, $X^L\in\Omega^{N_L}$ and $X^F\in\Omega^{N_F}$ are the stack vectors containing the states of all leaders and followers, $h_i: \mathbb{R}\times\Omega^{N^L}\to \mathbb{R}$ and $g_i: \mathbb{R}\times\Omega^{N^F}\to \mathbb{R}$ are internal dynamics characterizing each population, $u_i \in \mathbb{R}$ is the control input acting on leaders, and $W_i$ is a standard Wiener process with diffusion coefficient $D$. The interaction kernel $f:\Omega\to \mathbb{R}$ is the odd kernel \cite{maffettone2025leader}
\begin{equation} \label{eq:periodic_kernel}
    f(x) = \frac{\mathrm{sgn}(x)}{\mathrm{e}^{2\pi/\ell}-1} \left[\mathrm{e}^{\frac{2\pi-\vert x\vert }{\ell}} - \mathrm{e}^{\frac{\vert x\vert}{\ell}}\right].
\end{equation}
It describes how leaders influence followers, with $\ell$ being the characteristic interaction length. The internal dynamics $h_i$ and $g_i$ are unknown and uniformly bounded, that is, 
\begin{subequations}
    \begin{align}
    &\vert h_i(t, X^L(t))\vert < r, \, \forall i\in{1,\ldots,N^L},\, \forall t\in\mathbb{R}_{\geq 0},\\
    &\vert g_i(t, X^F(t))\vert < k,\, \forall i\in{1,\ldots,N^F},\, \forall t\in\mathbb{R}_{\geq 0},
\end{align}
\end{subequations}
with $r, k\geq 0$.
 
\subsection{Macroscopic model}
Since the perturbations are unknown but bounded, we apply the comparison principle to construct bounding systems that replace $h_i$ and $g_i$ with their worst-case values $\pm r$ and $\pm k$ \cite{maffettone2026robust}. Taking the mean-field limit of the bounding system of both populations yields the coupled PDEs
\begin{subequations}\label{eq:model}
\begin{align}
    \rho^L_t(x, t) &+ \left[\rho^L(x, t)  (u(x, t) \pm r )\right]_x = 0, \label{eq:leaders}\\
    \rho^F_t(x, t) &+ \left[\rho^F(x, t) (v^{FL}(x, t) \pm k)\right]_x = D\rho^F_{xx}(x, t),\label{eq:followers}
\end{align}
\end{subequations}
where $x\in\Omega$, the $\pm$ signs correspond to the upper and lower bounding systems, and
\begin{align}\label{eq:vFL}
    v^{FL}(x, t) = \int_{\Omega} f\left(\{x, y\}_\pi\right)\rho^L(y, t)\,\mathrm{d}y = (f*\rho^L)(x, t).
\end{align}
A control law that stabilizes both bounding systems guarantees stabilization of the original system~\eqref{eq:micro} for any realization of the perturbations. Model \eqref{eq:model} is complemented with initial conditions and periodic boundary conditions ensuring mass conservation
\begin{align}
    \left(\int_\Omega \rho^i(x, t)\,\mathrm{d}x\right)_t= M^i_t(t)=0, \quad i= L, F,
\end{align}
where $M^L,M^F$ are the mass of leaders and followers.

\subsection{Problem statement}\label{sec:prob_statement}
Given a time-invariant desired density profile $\bar{\rho}^F:\Omega \to \mathbb{R}_{> 0}$, such that $\int_\Omega \bar{\rho}^F\,\mathrm{d}x=M^F$, we seek for a periodic $u$ in \eqref{eq:model} such that
\begin{align}\label{eq:PS}
    \lim_{t\to\infty} \Vert e^F(\cdot, t) \Vert_2 =0,
\end{align}
where $e^F = \bar{\rho}^F-\rho^F$ and $\Vert \cdot \Vert_2$ is the $\mathcal{L}^2(\Omega)$-norm.

\section{Control design}\label{sec:control_design}
We design a control field $u:\Omega\times\mathbb{R}\to \mathbb{R}$ that guarantees asymptotic convergence of the control error. The strategy proceeds in two stages. First, we treat the velocity field $v^{FL}$ as a free design variable and derive stability conditions for the followers' dynamics. Then, we constrain $v^{FL} = f * \rho^L$ to recover a physically meaningful reference density for the leaders via deconvolution, and we construct a feedback law steering the leaders towards such a desired density. The resulting coupled system is analyzed using singular perturbation theory \cite{kokotovic1999singular}. Our strategy is complemented with micro-to-macro and macro-to-micro bridges as illustrated in Fig.~\ref{fig:block_diagram}: the micro-to-macro bridge consists in a density estimation procedure, while the macro-to-micro bridge is implemented via a spatial sampling.

\subsection{Followers control design} \label{sec:Followers_behavior}

As mentioned above, to study the stability properties of the follower population, we assume it is possible to arbitrarily choose their drift. 

Under this assumption, we choose $v^{FL}$ as the convex combination
\begin{align}\label{eq:control_action}
    v^{FL}(x, t) = [1-\alpha(t)] v^{FF}(x) + \alpha(t)v^{FB}(x, t),
\end{align}
where $\alpha:\mathbb{R}_{\geq0}\to[0, 1]$ is a design function, $v^{FF}$ is a feedforward action and $v^{FB}$ is a feedback correction.
More precisely, $v^{FF}$ is chosen as in \cite{maffettone2025leader}, that is
\begin{equation}
    v ^{FF}(x) = D\frac{\bar \rho^F_x(x)}{\bar \rho^F(x)}.
\end{equation}
This choice ensures that, in absence of any disturbance and unmodeled dynamics, when $v^{FB}=0$, the followers displace according to the desired density profile if $\|g_1\|_\infty<2$, with 
\begin{align}
        g_1(x) &=  \left[\frac{\bar{\rho}_x^F(x)}{\bar{\rho}^F(x)}\right]_x.\label{eq:G1}
\end{align}
For more details on this result, see \cite{maffettone2025leader}.

\begin{theorem} \label{thm:followers_convergence}
    We choose $v^{FL}$ in \eqref{eq:control_action}, with $v^{FB}$ such that
    \begin{align}\label{eq:vfl_tilde}
         \left[\rho^F(x, t)v^{FB}(x, t)\right]_x = q^F(x, t).
    \end{align}
    Here
    \begin{align}\label{eq:q}
        q^F(x, t) = -k_p^Fe^F(x, t) - k_s^F(t)\mathrm{sign}\left[e^F(x, t)\right] + \beta(t),
    \end{align}
    where $\beta$ is a bounded function of time. If $\Vert g_1 \Vert_\infty<2$, $k_p^F > 0$,
    \begin{align} \label{eq:choice_ks}
        k_s^F(t) > \frac{\alpha(t) D\Vert \bar{\rho}^F_{xx}(\cdot)\Vert_\infty + k\Vert \bar{\rho}^F_x(\cdot)\Vert_\infty}{\alpha(t)}, 
    \end{align}
    then $e^F$ globally asymptotically converges to 0 in $\mathcal{L}^2(\Omega)$, $\forall \alpha\in(0,1]$.
\end{theorem}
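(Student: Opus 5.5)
We will take the Lyapunov functional $V(t)=\tfrac12\Vert e^F(\cdot,t)\Vert_2^2$ and show that $\dot V\le -c\,V$ for some $c>0$, separately for each bounding system (both signs $\pm k$). Since $\bar\rho^F$ is time-invariant, $e^F_t=-\rho^F_t$. Substituting \eqref{eq:followers} with \eqref{eq:control_action} gives
\begin{align*}
e^F_t = \big[\rho^F\big((1-\alpha)v^{FF} \pm k\big)\big]_x + \alpha\, q^F - D\rho^F_{xx}.
\end{align*}
Here we used \eqref{eq:vfl_tilde} to replace $[\rho^F v^{FB}]_x$ by $q^F$. We then write $\rho^F=\bar\rho^F-e^F$ everywhere. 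This splits the right-hand side into three kinds of terms: those linear in $e^F$, the forcing terms that depend only on $\bar\rho^F$, and the control term $\alpha q^F$.

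The first step handles the terms coming from $\bar\rho^F$. Because $\bar\rho^F v^{FF}=D\bar\rho^F_x$, we have $[\bar\rho^F v^{FF}]_x - D\bar\rho^F_{xx}=0$. Writing $(1-\alpha)=1-\alpha$, the residual forcing is $-\alpha D\bar\rho^F_{xx} \pm k\bar\rho^F_x$. Its sup-norm is bounded by $\alpha D\Vert\bar\rho^F_{xx}\Vert_\infty + k\Vert\bar\rho^F_x\Vert_\infty$. This is exactly the numerator in \eqref{eq:choice_ks}. Pairing it with $e^F$ and combining with $-\alpha k_s^F\int_\Omega e^F\mathrm{sign}(e^F)=-\alpha k_s^F\Vert e^F\Vert_1$ gives a nonpositive contribution, since $\left|\int_\Omega w\,e^F\right|\le \Vert w\Vert_\infty\Vert e^F\Vert_1$. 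The $\beta$ term drops out, because $\int_\Omega e^F\,\mathrm{d}x=0$ by mass conservation. The proportional term contributes $-\alpha k_p^F\Vert e^F\Vert_2^2$.

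The second step handles the terms linear in $e^F$, namely
\begin{align*}
-(1-\alpha)[e^F v^{FF}]_x \mp k\,e^F_x + D e^F_{xx}.
\end{align*}
With periodic boundary conditions, $\int_\Omega e^F e^F_x\,\mathrm{d}x=0$, so the $\pm k$ term vanishes. Integration by parts turns the diffusion term into $-D\Vert e^F_x\Vert_2^2\le 0$. For the advective term, one integration by parts gives
\begin{align*}
-\int_\Omega e^F[e^F v^{FF}]_x\,\mathrm{d}x=-\tfrac12\int_\Omega v^{FF}_x (e^F)^2\,\mathrm{d}x=-\tfrac D2\int_\Omega g_1(e^F)^2\,\mathrm{d}x.
\end{align*}
This is bounded by $\tfrac D2\Vert g_1\Vert_\infty\Vert e^F\Vert_2^2$.

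The main obstacle is this indefinite term. We would control it with the diffusion term using a Wirtinger/Poincaré inequality for zero-mean periodic functions on $[-\pi,\pi]$, which gives $\Vert e^F_x\Vert_2^2\ge\Vert e^F\Vert_2^2$. The linear part is then bounded by $-D\big(1-\tfrac{(1-\alpha)}{2}\Vert g_1\Vert_\infty\big)\Vert e^F\Vert_2^2$. This is negative whenever $\Vert g_1\Vert_\infty<2$, which mirrors the argument of \cite{maffettone2025leader}.

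Summing the two steps gives $\dot V\le -2\big(\alpha k_p^F + D(1-\Vert g_1\Vert_\infty/2)\big)V$. This holds for any $\alpha\in(0,1]$, because $(1-\alpha)\le1$ and $\alpha$ is needed only so that the factor $1/\alpha$ in \eqref{eq:choice_ks} is finite. Global exponential, and hence asymptotic, convergence of $\Vert e^F\Vert_2$ to zero follows by Grönwall's inequality. It holds for both bounding systems and therefore for the original perturbed model.

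One technical point needs care: the existence of $v^{FB}$ solving \eqref{eq:vfl_tilde}. A periodic solution requires $\int_\Omega q^F\,\mathrm{d}x=0$. We would show that $\beta(t)$ can be chosen as $\tfrac{1}{2\pi}k_s^F\int_\Omega\mathrm{sign}(e^F)\,\mathrm{d}x$, which is bounded, to enforce this. Positivity of $\rho^F$ for the division would be assumed, as in \cite{maffettone2025leader}.
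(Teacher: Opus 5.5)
Your proposal is correct and follows essentially the same route as the paper. Both use the same error equation and the same Lyapunov functional $V^F=\tfrac12\Vert e^F\Vert_2^2$, and treat each term the same way:
the forcing $-\alpha D\bar\rho^F_{xx}\pm k\bar\rho^F_x$ is dominated via H\"older by the switching gain;
the advective term becomes $-\tfrac{(1-\alpha)D}{2}\int_\Omega g_1 (e^F)^2\,\mathrm{d}x$ after integration by parts;
the diffusion is handled by Poincar\'e--Wirtinger;
and the $\mp k e^F_x$ term vanishes by periodicity.
Your version is slightly sharper, since it keeps the factor $2$ on $\alpha k_p^F$ and states an explicit exponential rate valid for all $\alpha\in(0,1]$. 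It also makes explicit two points the paper leaves implicit: the $\beta$ term drops out because $\int_\Omega e^F\,\mathrm{d}x=0$, and $\beta$ must enforce $\int_\Omega q^F\,\mathrm{d}x=0$ for a periodic $v^{FB}$ to exist.
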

\begin{proof}
    We substitute \eqref{eq:control_action} with \eqref{eq:vfl_tilde} into \eqref{eq:followers}, yielding (dropping dependence from space and time for compactness)
    \begin{align}\label{eq:rho_f_t}
        \rho^F_t + (1-\alpha) D\left(\rho^F\frac{\bar{\rho}^F_{xx}}{\bar{\rho}^F}\right)_x + \alpha q^F\pm k\rho^F_x = D\rho^F_{xx}. 
    \end{align}
    Recalling that $e^F = \bar{\rho}^F-\rho^F$, we can rephrase \eqref{eq:rho_f_t} in terms of the error function, resulting in
    \begin{multline}\label{eq:eFt}
        e^F_t = -\alpha D\bar{\rho}^F_{xx} - (1-\alpha)D  \left(e^F\frac{\nabla\bar{\rho}^F}{\bar{\rho}^F}\right)_x + \alpha q^F \\\pm k\bar{\rho}^F_x \mp k e^F_x + De^F_{xx}.
    \end{multline}
    Initial and periodic boundary conditions of the error system can be recovered from those of \eqref{eq:model}. We introduce the Lyapunov functional $V^F=\frac{1}{2}\Vert e^F\Vert_2^2$ and compute its time derivative
    \begin{multline}\label{eq:Vt}
        V^F_t = \int_{\Omega}e^Fe^F_t\,\mathrm{d}x = -D\alpha \int_\Omega e^F \bar{\rho}^F_{xx} \,\mathrm{d}x  +D  \int_\Omega e^F e^F_{xx} \,\mathrm{d}x \\-(1-\alpha)D \int_\Omega e^F \left(e^F\frac{\bar{\rho}^F_x}{\bar{\rho}^F}\right)_x\,\mathrm{d}x \pm k\int_\Omega e^F\bar{\rho}^F_x\,\mathrm{d}x \\\mp k \int_\Omega e^Fe^F_x\,\mathrm{d}x + \alpha \int_\Omega e^F q^F\,\mathrm{d}x,
    \end{multline}
    where we used \eqref{eq:eFt}.

    Let us establish the following relations:
    \begin{subequations}\label{eq:bounds}
        \begin{align}
            &D\int_\Omega e^Fe^F_{xx} \,\mathrm{d}x = -D \int_\Omega \left(e^F_{x}\right)^2 \,\mathrm{d}x = - D \Vert e^F_x \Vert_2^2 \leq -2DV^F,\\
            &\alpha D\left\vert \int_{\Omega}e^F\bar{\rho}^F_{xx}\,\mathrm{d}x\right\vert \leq \alpha D\Vert e^F\bar{\rho}^F_{xx} \Vert_1 \leq \alpha D \Vert e^F \Vert_1 \Vert \bar{\rho}^F_{xx}\Vert_\infty,\\
            &-(1-\alpha)D \int_\Omega e^F \left(e^F\frac{\bar{\rho}^F_x}{\bar{\rho}^F}\right)_x\,\mathrm{d}x\nonumber\\&= \frac{(1-\alpha) D}{2}\int_\Omega \left[(e^F)^2\right]_x \frac{\bar{\rho}^F_x}{\bar{\rho}^F}\,\mathrm{d}x\nonumber\\
            &= -\frac{(1-\alpha) D}{2}\int_\Omega (e^F)^2 g_1\,\mathrm{d}x \leq \frac{(1-\alpha) D}{2}\left\vert \int_\Omega (e^F)^2 g_1\,\mathrm{d}x\right\vert\nonumber\\
            &\leq \frac{(1-\alpha) D}{2}\Vert e^Fe^Fg_1\Vert_1 \leq (1-\alpha) D \Vert g_1\Vert_\infty V^F,\\
            &\pm k\int_{\Omega} e^F \bar{\rho}^F_x\,\mathrm{d}x\leq k\left\vert\int_{\Omega} e^F \bar{\rho}^F_x\,\mathrm{d}x \right\vert \leq k \Vert e^F \bar{\rho}^F_x\Vert_1 \nonumber\\&\leq k \Vert \bar{\rho}^F_x\Vert_\infty\Vert e^F \Vert_1\\
            &\pm k \int_{\Omega} e^F e^F_x \,\mathrm{d}x = \pm\frac{k}{2}\int_{\Omega} \left[(e^F)^2\right]_x  \,\mathrm{d}x = \pm \frac{k}{2} \left[(e^F)^2\right]_{-\pi}^\pi= 0,
            \end{align}
    \end{subequations}
    in which we used integration by parts (boundary terms vanish due to boundary conditions), Poincarè-Wirtinger \cite{heinonen2001lectures} and H$\mathrm{\Ddot{o}}$lder inequality \cite{axler2020measure}, and the identity $[(e^F)^2]_x = 2e^Fe^F_x$. Note that the last relation is true due to periodic boundaries, and the function $g_1$ is defined in \eqref{eq:G1}.
    
    Applying the bounds in \eqref{eq:bounds} to \eqref{eq:Vt} yields
    \begin{multline}\label{eq:penultimo_bound}
        V^F_t\leq \left[D(1-\alpha)\Vert g_1\Vert_\infty-2D\right]V^F\\ + \left[\alpha D \Vert\bar{\rho}^F_{xx} \Vert_\infty + k\Vert \bar{\rho}^F_x\Vert_\infty\right]\int_{\Omega}\vert e^F\vert\,\mathrm{d}x + \alpha\int_\Omega e^F q^F\,\mathrm{d}x.
    \end{multline}
    Substituting the expression of $q^F$ in \eqref{eq:q} into \eqref{eq:penultimo_bound} yields
    \begin{multline}\label{eq:Vt_bound}
        V^F_t\leq \left[D(1-\alpha)\Vert g_1\Vert_\infty-2D-\alpha k_p^F\right]V^F\\ + \left[\alpha D \Vert\bar{\rho}^F_{xx} \Vert_\infty + k\Vert \bar{\rho}^F_x\Vert_\infty-\alpha k_s^F\right]\int_{\Omega}\vert e^F\vert\,\mathrm{d}x.
    \end{multline}
    Under the theorem hypotheses, the right-hand side is negative proving the claim.
\end{proof}
\begin{remark}
    The function $\beta(t)$ in \eqref{eq:q} does not play any role for stability properties. However, it needs to be chosen to ensure the control flux fulfills boundary conditions; see \cite[Sec. VI.4]{maffettone2026robust}
\end{remark}
\begin{remark}
    The presence of discontinuous terms in \eqref{eq:q} and \eqref{eq:ql} (i.e., $\mathrm{sign}(\cdot)$) implies that the solutions of the closed-loop PDEs \eqref{eq:model} must be interpreted in a weak sense. Existence of such solutions under discontinuous flux is guaranteed by \cite[Definition~1]{cristofaro2019robust}; see also \cite[Remark~2]{maffettone2026robust} for a detailed discussion in  our setting.
\end{remark}

\subsection{Deconvolution and choice of $\alpha(t)$}\label{sec:deconv}
Since $v^{FL} = f * \rho^L$, we now recover a reference density $\bar{\rho}^L$ for the leaders that produces $v^{FL}$ as defined in \eqref{eq:control_action}. Using the kernel \eqref{eq:periodic_kernel}, any velocity field $v$ can be deconvolved to retrieve the generating density as
\begin{align} \label{eq:deconvolution}
    \rho(x, t) = \frac{v_x(x,t)}{2} - \frac{1}{2\ell^2}\int v(x,t)\,\mathrm{d}x+ I(t),
\end{align}
where $I:\mathbb{R}\to \mathbb{R}$ is an arbitrary function of time; see \cite[App.~B]{maffettone2025leader}. Since $v^{FL}$ is a convex combination of feedforward and feedback contributions and convolution is linear, the generating density takes the form
\begin{align}\label{eq:rho_hat_L}
    \bar{\rho}^L(x, t) = [1-\alpha(t)] \rho^{FF}(x) + \alpha(t) \rho^{FB}(x, t) +  G(t),
\end{align}
where $G:\mathbb{R}_{\geq 0}\to \mathbb{R}$ is an arbitrary function of time, and $\rho^{FF}$, $\rho^{FB}$ are obtained by deconvolving $v^{FF}$ and $v^{FB}$, respectively.

As noted in \cite{maffettone2025leader}, the deconvolution operation does not guarantee that the obtained densities are physically meaningful (i.e. that they are non negative and that they sum to the mass of available leaders). 
To guarantee positiveness, we choose $G(t) = - (1-\alpha(t)) \min_x(\rho^{FF}(x,t)) -  \alpha(t)\min_x(\rho^{FB}(x,t)) + C(t)$ that guarantees $\bar \rho^L(x,t)>0$ for any $C(t) \ge 0$.
Instead, to guarantee that the density integrates to the available mass of leaders $M^L$, we choose suitable values of $\alpha$ and $C$.
Specifically, it holds that
\begin{align}
    \int_\Omega\bar{\rho}^L\,\mathrm{d}\mathbf{x} = [1-\alpha] M^{FF} + \alpha M^{FB} + \vert \Omega\vert C,
\end{align}
where $M^{FB} = \int_\Omega \rho^{FB}\,\mathrm{d}x$ and we recall $M^{FF} = \int_\Omega \rho^{FF}\,\mathrm{d}x$. We need to choose $\alpha$ that guarantees 
\begin{align} \label{eq:alpha_feasibility}
    [1-\alpha] M^{FF} + \alpha M^{FB}(t) + \vert \Omega\vert C \leq M^L.
\end{align}
This allows choosing a non-negative $C$ such that $\int_\Omega\bar{\rho}^L\,\mathrm{d}x=M^L$. 
Note that \eqref{eq:alpha_feasibility} is equivalent to 
\begin{align}
    \alpha [M^{FB}-M^{FF}] + \vert \Omega\vert C \leq M^L - M^{FF}.
\end{align}
Assuming $M^L - M^{FF}>0$, which is guaranteed under the feasibility conditions reported in \cite{maffettone2025leader}, we choose
\begin{equation} \label{eq:alpha_choice}
    \alpha =
    \begin{cases}
        \frac{M^L - M^{FF}}{M^{FB}(t)-M^{FF}}\, &\mathrm{if}\, M^{FB}(t)>M^{FF}\\
        1, &\mathrm{otherwise}
    \end{cases}
\end{equation}
and 
\begin{equation}
    C = \max\left[0, M^L- M^{FF}-\alpha (M^{FB}-M^{FF})\right].
\end{equation}
This ensures that $\bar \rho ^L$ is a physically meaningful density.

\subsection{Minimum leader mass for feasibility}
The choice of $\alpha$ in \eqref{eq:alpha_choice} maximizes the feedback weight given the available leader mass, but can violate \eqref{eq:choice_ks}. To guarantee simultaneously a feasible $\bar{\rho}^L$ and global asymptotic stability, we combine \eqref{eq:choice_ks} and \eqref{eq:alpha_choice}, yielding
\begin{subequations}    
\begin{align} 
    \label{eq:admissible_mass}
    \frac{ k\Vert \bar{\rho}^F_x\Vert_\infty}{k_s^F-D\Vert \bar{\rho}^F_{xx}\Vert_\infty} &\le \frac{M^L - M^{FF}}{M^{FB}-M^{FF}}, \quad &\mathrm{if}\ M^{FB}>M^{FF}\\
    \label{eq:best_case_limit_ks}
    \frac{ k\Vert \bar{\rho}^F_x\Vert_\infty}{k_s^F-D\Vert \bar{\rho}^F_{xx}\Vert_\infty} &\le 1, \quad &\mathrm{if}\ M^{FB}\le M^{FF}.
\end{align}
\end{subequations}
When $M^{FB} \le M^{FF}$, the action is purely feedback and \eqref{eq:best_case_limit_ks} reduces to $k_s^F > D\Vert \bar{\rho}^F_{xx}\Vert_\infty + k\Vert \bar{\rho}^F_x\Vert_\infty$. When $M^{FB} > M^{FF}$, inequality \eqref{eq:admissible_mass} provides an explicit lower bound on the leader mass $M^L$ for the control problem to admit a solution
\begin{equation}
    M^L\ge \sup_{t\in\mathbb{R}_{\geq 0}}\left(M^{FF}+\frac{k\Vert \bar{\rho}^F_x\Vert_\infty(M^{FB}-M^{FF})}{k_s^F-D\Vert \bar{\rho}^F_{xx}\Vert_\infty}\right).
\end{equation}
Since $q^F$ is linear in $k_s^F$ and $M^{FB}$ maps to $q^F$ through linear operators, we write $M^{FB} = k_s^F M^s + M$, where $M^s$ and $M$ are the mass contributions from the switching term $-\mathrm{sign}(e)$ and the remaining terms in $q^F$, respectively. In the limit $k_s^F \to \infty$, the minimum leader mass becomes
\begin{equation} \label{eq:minimum_mass}
     M^L\ge \sup_{t\in\mathbb{R}_{\geq 0}}\left(M^{FF}+k\Vert \bar{\rho}^F_x\Vert_\infty M^{s}\right)
\end{equation}

\begin{remark}
The lower bound \eqref{eq:minimum_mass} can be 
interpreted as a \emph{herdability} condition 
\cite{lama2024shepherding} in a continuum setting, as it quantifies the minimum 
leader resources required to steer the follower 
population to a prescribed density. Additionally, condition \eqref{eq:minimum_mass} shows that this requirement grows linearly with the perturbation 
bound $k$.
\end{remark}

\subsection{Leaders' control} 
\label{sec:leaders_control}
We choose $u$ in \eqref{eq:leaders} such that leaders track the time-varying reference $\bar{\rho}^L$ in \eqref{eq:rho_hat_L} with $\alpha$ and $C$ as in Sec.~\ref{sec:deconv}. To reject the bounded perturbations on the leaders, we select $u$ such that
\begin{align}
    \left[\rho^L(x,t) u(x,t)\right]_x = q^L(x,t),
\end{align}
where
\begin{align}\label{eq:ql}
    q^L(x, t) = -k_p^Le^L(x, t) - k_s^L\mathrm{sign}\left[e^L(x, t)\right] + \bar{\rho}^L_t(x, t)+ \delta(t),
\end{align}
$e^L = \bar{\rho}^L - \rho^L$, $k_p^L(t)>0$, $k_s^L > k\Vert \bar{\rho}^L(\cdot, t)\Vert_\infty$, and $\delta$ is chosen to satisfy boundary conditions. This control law guarantees global exponential stability in $\mathcal{L}^2(\Omega)$, since we can write
\begin{align}\label{eq:leaders_stability}
    V^L_t(t) \leq -k_p^LV^L(t),
\end{align}
where $V^L = \frac{1}{2}\Vert e^L\Vert_2^2$; the proof follows from \cite[Theorem~1]{maffettone2026robust}.
\begin{figure*}[ht]
    \centering
    \includegraphics[width=1\linewidth]{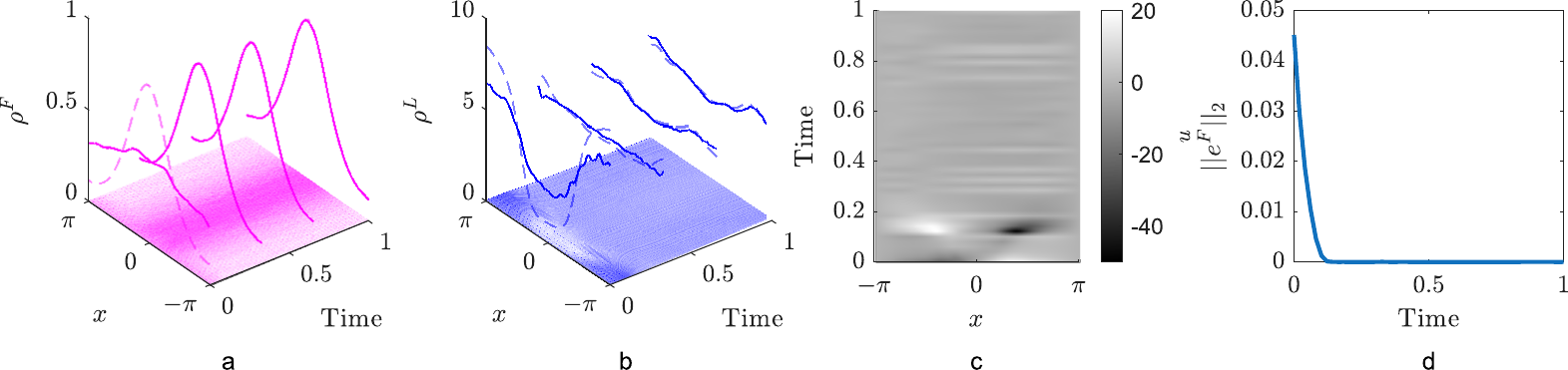}
    \caption{\textbf{Control of biased random walkers moving on a ring}.  \textbf{a}. Evolution in time and space of all the followers in the ensemble (x- and y-axes). On the z-axis the estimated (solid) and desired (dashed) densities are displayed in four representative time instants. 
    \textbf{b}. Evolution in time and space of all the leaders in the ensemble (x- and y-axes). On the z-axis the estimated (solid) and desired (dashed) densities are displayed in four representative time instants. 
    \textbf{c}. Evolution in time and space of the velocity field $u(x,t)$ generated by the controller and used to steer the leaders towards their desired density.
    \textbf{d}. Evolution of the $\mathcal{L}^2(\Omega)$ norm of the followers control error in time.
    }
    \label{fig:numerical_Validation}
\end{figure*}

\subsection{Coupled stability analysis}\label{sec:coupled_stability}
The analysis in Secs.~\ref{sec:Followers_behavior} and \ref{sec:leaders_control} treats leaders and followers independently. We now account for the coupling: the followers' Lyapunov derivative depends on the leaders' tracking error $e^L$ through the interaction kernel. Specifically, the $\mathcal{L}^2(\Omega)$ norms of the density errors are given by
\begin{subequations}
    \begin{multline} \label{eq:follower_with_leaders}
V^F_t \leq \left[D(1-\alpha)\Vert g_1\Vert_\infty-2D-\alpha k_p^F\right]V^F\\ 
- \int_\Omega e^F \left[\bar{\rho}^F (f*e^L)\right]_x\,\mathrm{d}x- \int_\Omega e^F \left[\rho^F (f*e^L)\right]_x\,\mathrm{d}x,
\end{multline}
\begin{equation} \label{eq:leaders_Lyap}
    V^L_t \leq -k_p^LV^L.
\end{equation}
\end{subequations}
The two additional terms in \eqref{eq:follower_with_leaders} capture the effect of the leaders' transient on the followers. We bound them as
\begin{subequations}\label{eq:bounds_TSS}
    \begin{align}
        \left\vert \int_\Omega e^F \left[\bar{\rho}^F (f*e^L)\right]_x\,\mathrm{d}x\right\vert &\leq J \sqrt{V^F}\sqrt{V^L},\\
        \left\vert \int_\Omega e^F \left[e^F (f*e^L)\right]_x\,\mathrm{d}x\right\vert&\leq S V^F\sqrt{V^L},
    \end{align}
\end{subequations}
where
\begin{subequations}\label{eq:constants}
        \begin{align}
        J &=  2\Vert \bar{\rho}^F_{x}\Vert_2 \Vert f\Vert_2 +2\Vert\bar{\rho}^F\Vert_2 \Vert f_x\Vert_2,  \\
        S &=  \sqrt{2}\Vert f\Vert_2.
    \end{align}
\end{subequations}
To recover such bounds we proceed similarly to \eqref{eq:bounds} bounding convolutions with Young's inequality \cite{axler2020measure}.

\begin{theorem}\label{thm:coupled_stability}
    Consider the leader--follower system \eqref{eq:model} with the followers' control law \eqref{eq:control_action}--\eqref{eq:q} and the leaders' control law \eqref{eq:ql}, under the hypotheses of Theorem~\ref{thm:followers_convergence}. If $k_p^L$ is chosen sufficiently large, then the followers' density error $e^F$ globally asymptotically converges to zero in $\mathcal{L}^2(\Omega)$.
\end{theorem}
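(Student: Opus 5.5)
We will treat the coupled leader–follower error system as a cascade of a fast subsystem (the leaders' tracking error $e^L$) feeding a slow subsystem (the followers' error $e^F$). The argument uses a composite Lyapunov functional, in the spirit of singular perturbation theory \cite{kokotovic1999singular}.

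\textbf{Step 1: follower inequality.} By \eqref{eq:leaders_Lyap}, $V^L(t)\le V^L(0)e^{-k_p^L t}$, so $\sqrt{V^L}$ decays exponentially at rate $k_p^L/2$. Next we write $\rho^F=\bar\rho^F-e^F$ in the second coupling term of \eqref{eq:follower_with_leaders}. That term then splits into a $\bar\rho^F$ part and an $e^F$ part, and each is bounded by \eqref{eq:bounds_TSS}. Collecting the terms gives
\begin{equation*}
V^F_t\le -\lambda V^F + 2J\sqrt{V^F}\sqrt{V^L} + S V^F\sqrt{V^L},
\end{equation*}
where $\lambda:=2D-D(1-\alpha)\Vert g_1\Vert_\infty+\alpha k_p^F>0$ by the hypotheses of Theorem~\ref{thm:followers_convergence}. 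The switching terms have already been shown to be nonpositive under \eqref{eq:choice_ks}, so they are dropped.

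\textbf{Step 2: composite functional.} Set $W=V^F+\mu V^L$ with $\mu>0$. Young's inequality gives $2J\sqrt{V^F}\sqrt{V^L}\le \frac{\lambda}{2}V^F+\frac{2J^2}{\lambda}V^L$. The cubic term $SV^F\sqrt{V^L}$ is the obstacle, because it prevents a purely quadratic argument from being global. We would handle it by using the a priori bound $\sqrt{V^L(t)}\le\sqrt{V^L(0)}$. Alternatively, we would exploit the fact that $\sqrt{V^L}$ is integrable in time, which is the more robust route:
\begin{enumerate}
\item Write $V^F_t\le(-\tfrac{\lambda}{2}+S\sqrt{V^L(t)})V^F+\frac{2J^2}{\lambda}V^L(t)$.
\item Apply the comparison lemma together with Gronwall's inequality. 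This gives $V^F(t)\le \exp\!\big(\int_0^t(-\tfrac{\lambda}{2}+S\sqrt{V^L})\,ds\big)V^F(0)+\dots$
\end{enumerate}
Since $\int_0^\infty \sqrt{V^L}\,ds\le 2\sqrt{V^L(0)}/k_p^L$, the exponential factor is bounded by $e^{-\lambda t/2}e^{2S\sqrt{V^L(0)}/k_p^L}$. The forcing term is exponentially decaying, so it gives a convolution of two decaying exponentials, which tends to zero. Hence $V^F\to0$ for every initial condition. This yields global convergence, with no smallness assumption on the initial data needed.

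\textbf{Role of $k_p^L$ and main difficulty.} The requirement that $k_p^L$ be ``sufficiently large'' enters in two places. First, it separates time scales: we need $k_p^L>\lambda$, so the leaders' error decays faster than the followers' dynamics and the convolution estimate is clean. Second, it keeps the transient overshoot factor $e^{2S\sqrt{V^L(0)}/k_p^L}$ and the gain $J^2/(\lambda k_p^L)$ small, which matches the singular perturbation interpretation with $\varepsilon=1/k_p^L$. The main care point is that $\alpha(t)$ varies in time, so $\lambda$ must be bounded below uniformly. We would use $\alpha(t)\in(0,1]$ and $\Vert g_1\Vert_\infty<2$ to get $\lambda\ge D(2-\Vert g_1\Vert_\infty)>0$ for all $t$. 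A second care point is to verify that $\bar\rho^L_t$ in \eqref{eq:ql} stays bounded, so that \eqref{eq:leaders_stability} holds along the coupled trajectories.
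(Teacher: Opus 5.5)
Your proof is correct, but it takes a different route from the paper's.

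\textbf{The paper's route.} The paper rewrites the two scalar inequalities as a singularly perturbed bounding system with $\epsilon=1/k_p^L$. It identifies the boundary layer $\epsilon V^L_t=-V^L$, which is globally exponentially stable at $V^L=0$, and the reduced system $V^F_t=-\lambda V^F$. It then invokes Tikhonov's theorem for small $\epsilon$.

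\textbf{Your route.} You exploit the cascade structure directly:
\begin{itemize}
\item the explicit exponential decay of $V^L$;
\item Young's inequality to absorb the $\sqrt{V^F}\sqrt{V^L}$ term;
\item the time-integrability of $\sqrt{V^L}$ to neutralize the cubic term $SV^F\sqrt{V^L}$ inside the Gronwall factor;
\item the uniform bound $\lambda\ge D(2-\Vert g_1\Vert_\infty)$ to handle the time-varying $\alpha(t)$.
\end{itemize}

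\textbf{What your route buys.} You get a self-contained estimate with explicit transient bounds. You also sidestep hypotheses of Tikhonov's theorem that the paper does not check. The right-hand side is not Lipschitz at $V^F=0$ because of $\sqrt{V^F}$. Moreover, Tikhonov's theorem on an infinite horizon by itself gives $O(\epsilon)$-closeness of trajectories, not convergence of the full system to the origin.

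Your estimate also proves more than you claim. Since $\int_0^t e^{-\lambda(t-s)/2}e^{-k_p^L s}\,\mathrm{d}s\to 0$ for any positive rates, your argument gives $V^F\to 0$ for every $k_p^L>0$. The condition $k_p^L>\lambda$ is not needed. Largeness of $k_p^L$ only shrinks the overshoot factor $e^{2S\sqrt{V^L(0)}/k_p^L}$ and the forcing amplitude $J^2V^L(0)/\lambda$.

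\textbf{What the paper's route buys.} Its singular-perturbation framing gives the time-scale-separation interpretation. That interpretation motivates the practical guideline $k_p^L\gg 2(2D+\alpha k_p^F)$ in the subsequent remark.

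\textbf{Minor points.} The composite functional $W=V^F+\mu V^L$ you announce in Step~2 is never used and can be dropped. The alternative based on $\sqrt{V^L(t)}\le\sqrt{V^L(0)}$ would require a smallness condition on the initial data, so it would not give a global result. You are right to discard it. Finally, your factor $2J$ (from substituting $\rho^F=\bar\rho^F-e^F$) rather than $J$ is harmless; only the structure of the bound matters.
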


\begin{proof}
Substituting the bounds \eqref{eq:bounds_TSS} into \eqref{eq:follower_with_leaders} yields
\begin{align} \label{eq:follower_with_leaders_simplified}
V^F_t \le & \left[D(1-\alpha)\Vert g_1\Vert_\infty-2D-\alpha k_p^F\right]V^F\nonumber\\ 
&+ J \sqrt{V^F}\sqrt{V^L} +  S V^F\sqrt{V^L}.
\end{align}
Letting $\epsilon = 1/k_p^L$, the coupled system becomes
\begin{subequations}
    \begin{align}
    V^F_t &= \left[D(1-\alpha)\Vert g_1\Vert_\infty-2D-\alpha k_p^F\right]V^F\nonumber\\ 
    \label{eq:followers_slow}
    &\quad+ J \sqrt{V^F}\sqrt{V^L} +  S V^F\sqrt{V^L},\\
    \label{eq:leaders_fast}
    \epsilon V^L_t & =  -V^L,
\end{align}
\end{subequations}
where, with an abuse of notation, $V^F$ and $V^L$ denote the variables of the bounding system. This is a singularly perturbed system in standard form \cite{kokotovic1999singular}. The boundary-layer system \eqref{eq:leaders_fast} has a unique globally exponentially stable equilibrium at $V^L = 0$. Substituting it into the reduced system \eqref{eq:followers_slow} yields
\begin{align*}
    V^F_t = \left[D(1-\alpha)\Vert g_1\Vert_\infty-2D-\alpha k_p^F\right]V^F,
\end{align*}
which is globally asymptotically stable under the hypotheses of Theorem~\ref{thm:followers_convergence}. By Tikhonov's theorem, for sufficiently small $\epsilon$ (equivalently, sufficiently large $k_p^L$), the trajectories of the full system converge to those of the reduced system, completing the proof.
\end{proof}

\begin{remark}
    The timescale separation requires $k_p^L$ to be large relative to the followers' convergence rate. Since $V^L$ decays as $e^{-k_p^L t}$ and the coupling enters through $\sqrt{V^L}$, a practical guideline is $k_p^L \gg 2(2D + \alpha k_p^F)$, ensuring that the leaders' transient is negligible on the followers' timescale.
\end{remark}

\section{Numerical Validation}\label{sec:numerical_validation}
For validation, we control an ensemble of biased random walkers with heterogeneous drifts on the unit circle
\begin{equation} \label{eq:1D_Random_Walkers}
    \mathrm{d}x_i^F = \left(b_i + \sum_{j=1}^{N^L} f(\{x_i^F(t),x_j^L(t)\}_\pi) \right)  \mathrm{d}t + \sqrt{2D}\mathrm{d}W_i 
\end{equation}
for $i=1,\,\dots,\,N^F$, with $N^F=5000$. With respect to \eqref{eq:micro}, $g_i(x_i(t))=b_i$ is a constant drift drawn from $\mathcal{U}([-2,2])$. We set $D=0.1$, $\ell=\pi$, and assign each follower a mass $1/N^F$, yielding $M^F=1$. Leaders are unperturbed single integrators,
\begin{equation} \label{eq:leaders_validation}
    \mathrm{d} x_i^L(t) = u_i(t)\,\mathrm{d}t,
\end{equation}
for $i=1,\dots,N^L$, with $N^L=5000$ and mass $30/N^L$ each, so that $M^L=30$. The target follower density is the Von~Mises distribution $\bar \rho ^F=\mathrm{e}^{(\kappa \cos(x-\mu))}$ with $\mu = 0$, $\kappa = 1$.

Since leaders are unperturbed, we set $k=1$, $r=0$, and choose $k_s^F = 5( D\Vert \bar{\rho}^F_{xx}\Vert_\infty +  k\Vert \bar{\rho}^F_x\Vert_\infty)$, $k_p^F = 2$, $k_s^L=0.1$. The choice $k_p^L=50$ enforces the timescale separation required by Theorem~\ref{thm:coupled_stability}. The derivative $\bar \rho^L_t$ is approximated via backward differences on an exponential moving average of $\bar \rho^L$. To suppress chattering, we regularize $\mathrm{sign}(x)$ as $\tanh(\eta x)$ with $\eta = 10^2$. The densities $\rho^F$ and $\rho^L$ are estimated from agents' positions using a micro-to-macro bridge: a normalized histogram on a grid of $150$ points, filtered with a Gaussian kernel of length of approximately $\pi/30$. The same filter is applied to $\bar{\rho}^L$ to reduce numerical instabilities. The macro-to-micro bridge assigns each leader $u_i(t) = u(x_i^L, t)$ by spatial sampling. Simulations are run in MATLAB using forward Euler (leaders, step $2\cdot10^{-6}$) and Euler--Maruyama (followers, step $2\cdot10^{-4}$).

Fig.~\ref{fig:numerical_Validation}a shows that the followers converge to the desired density in approximately $0.2$ time units. Fig.~\ref{fig:numerical_Validation}b displays the leaders tracking the time-varying reference $\bar \rho ^L$; the residual error is due to the regularization of the switching action and the numerical approximation of $\bar \rho^L_t$. The control field is shown in Fig.~\ref{fig:numerical_Validation}c. $\Vert e^F\Vert_2$ decreases monotonically to zero (Fig.~\ref{fig:numerical_Validation}d). The numerically computed minimum leader mass is $M^L\geq 17$, which is satisfied with $M^L=30$.

To assess robustness to increasing heterogeneity, we repeated the simulations with $b_i\sim \mathcal{U}([-B,B])$ for $B\in[2,20]$, setting $k_s^F = 5( D\Vert \bar{\rho}^F_{xx}\Vert_\infty +  B\Vert \bar{\rho}^F_x\Vert_\infty)$ and $k_p^F = 2$. As shown in Fig.~\ref{fig:Robustness_Mass}a, the steady-state error remains zero as long as \eqref{eq:minimum_mass} is satisfied. When heterogeneity makes the minimum mass required higher than the available mass, asymptotic convergence is lost and the error grows with $B$. A representative case with $B=20$ is shown in Fig.~\ref{fig:Robustness_Mass}b, where leaders fail to steer followers to the target profile.

\subsection{Finite population effects}
Our architecture rests on a mean-field assumption that requires populations to be sufficiently large. 
To quantify the minimum population sizes for the micro-to-macro bridges to operate reliably, we simulated system~\eqref{eq:model} under the control law of Sec.~\ref{sec:control_design}, using the same scheme and gains as in Fig.~\ref{fig:numerical_Validation}. The total masses are held fixed at $M^F = 1$ and $M^L = 30$ across all runs, so that varying $N^L$ or $N^F$ does not change the leader-to-follower mass ratio. $N^L$ ($N^F$) were sampled in the interval $[10,5000]$ using 30 samples equally spaced in logarithmic scale.

Fig.~\ref{fig:Finite_population_size} (orange line) shows the residual error as a function of the number of leaders $N^L$, with the number of followers fixed at $N^F = 1000$. A clear threshold emerges: below the critical value of $N^L \approx 130$, the leaders are too sparse for the macro-to-micro bridge to faithfully reproduce the control field, and the error consistently settles at a high value. Above this threshold, the error norm on the followers density decreases below $10^{-2}$, identifying this value as the minimum amount of leaders required to correctly reconstruct densities from agents' positions. 
Conversely, Fig.~\ref{fig:Finite_population_size} (blue line) reports the error as a function of $N^F$ with $N^L = 1000$ fixed. We find that, even in the presence of a large number of leaders, a minimum number of followers ($N_F \approx 400$) is required to reduce the norm of the error to values less than $10^{-2}$.
These observations are in line with the herdability conditions recently reported in~\cite{lama2024shepherding}. 

\begin{figure}
    \centering
    \includegraphics[width=1\linewidth]{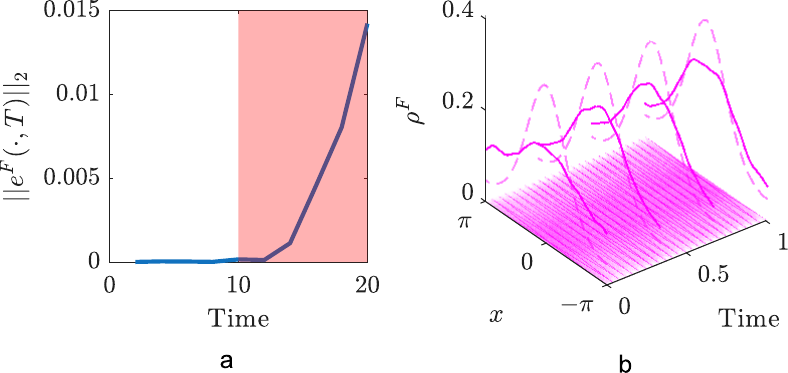}
    \caption{\textbf{Robustness to heterogeneity}.  \textbf{a}. $\mathcal{L}_2(\Omega)$ norm of the followers control error for increasing heterogeneity in the follower population. The red shaded area represents the conditions where \eqref{eq:minimum_mass} is not satisfied. $T=1$ is the terminal instant of the simulation. 
    \textbf{b}. Evolution in time and space of all the followers in the ensemble when $b_i\sim\mathcal{U}([-20,20])$. On the z axis the estimated (solid) and desired (dashed) densities are displayed in four representative time instants. 
    }
    \label{fig:Robustness_Mass}
\end{figure}

\begin{figure}
    \centering
    \includegraphics[width=1\linewidth]{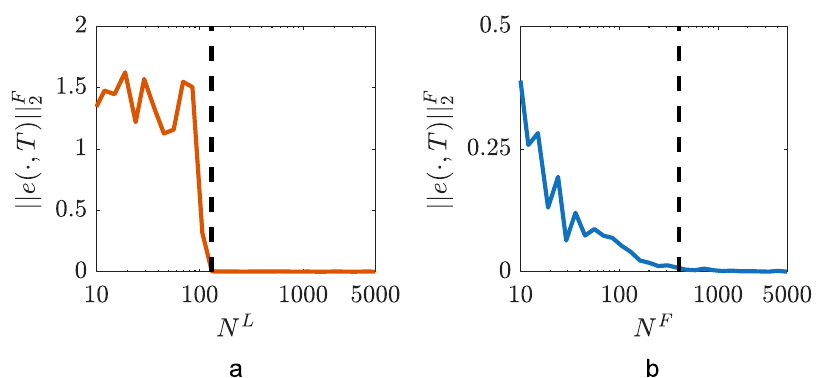}
    \caption{\textbf{Effects of finite population size}. $\mathcal{L}_2(\Omega)$ norm of the followers control error when (a) varying the number of leaders $N^L\in[10,5000]$ when $N^F=1000$, and (b) varying the number of followers $N^F\in[10,5000]$ when $N^L=1000$ (vertical dashed lines denote the threshold above which the error norm is below $10^{-2}$). $T=1.5$ is the terminal instant of the simulation. $N^L$($N^F$) were sampled in the interval with $[10,5000]$ using 30 samples equally spaced in logarithmic scale. 
    }
    \label{fig:Finite_population_size}
\end{figure}

\section{Conclusions}
\label{sec:conclusions}
We proposed a robust multi-scale leader-follower control strategy for large-scale multi-agent systems in which both populations can be affected by unknown bounded perturbations. The control architecture consists of a macroscopic feedback law that guarantees global asymptotic convergence of the followers' density to a prescribed profile, and maps it to individual leader inputs via spatial sampling. A minimum leader mass and a timescale separation between leaders and followers dynamics are required to guarantee global asymptotic stability of the followers closed loop dynamics. Numerical validation on heterogeneous biased random walkers confirms the theoretical findings and illustrates robustness to increasing levels of population heterogeneity. Furthermore, we study the effects of a finite population sizes on the performance of the control architecture.

Current limitations include the mean-field assumption ($N^F,N^L\to\infty$), the assumption of having global knowledge of both density profiles (centralized control), and the restriction to one-dimensional domains. Future work will address convergence guarantees in the presence of finite population sizes, distributed density estimation algorithms~\cite{dilorenzo2025distributed}, and higher-dimensional settings.

\section*{Acknowledgements}
AI tools were used for language editing. All technical content is entirely the authors’ own.

\bibliographystyle{IEEEtran}
\bibliography{references}

\end{document}